\let\@@citation@@=\citation
\renewcommand{\citation}[1]{\@@citation@@{#1}%
	\@for\@tempa:=#1\do{\@ifundefined{cit@\@tempa}%
		{\global\@namedef{cit@\@tempa}{}}{}}%
}
\def\@lbibitem[#1]#2#3\par{%
	\@ifundefined{cit@#2}{}{\@skiphyperreftrue
		\H@item[%
		\ifx\Hy@raisedlink\@empty
		\hyper@anchorstart{cite.#2\@extra@b@citeb}%
		\@BIBLABEL{#1}%
		\hyper@anchorend
		\else
		\Hy@raisedlink{%
			\hyper@anchorstart{cite.#2\@extra@b@citeb}\hyper@anchorend
		}%
		\@BIBLABEL{#1}%
		\fi
		\hfill
		]%
		\@skiphyperreffalse}%
	\if@filesw
	\begingroup
	\let\protect\noexpand
	\immediate\write\@auxout{%
		\string\bibcite{#2}{#1}%
	}%
	\endgroup
	\fi
	\ignorespaces
	\@ifundefined{cit@#2}{}{#3}}
\def\@bibitem#1#2\par{%
	\@ifundefined{cit@#1}{}{\@skiphyperreftrue\H@item\@skiphyperreffalse
		\Hy@raisedlink{%
			\hyper@anchorstart{cite.#1\@extra@b@citeb}\relax\hyper@anchorend
	}}%
	\if@filesw
	\begingroup
	\let\protect\noexpand
	\immediate\write\@auxout{%
		\string\bibcite{#1}{\the\value{\@listctr}}%
	}%
	\endgroup
	\fi
	\ignorespaces
	\@ifundefined{cit@#1}{}{#2}}
\DeclareMathSymbol{\lsb@l}{\mathalpha}{letters}{`l}
\newtheorem{thm}{Theorem}
\newtheorem{cor}[thm]{Corollary}
\newtheorem{lem}[thm]{Lemma}
\newtheorem{prop}[thm]{Proposition}
\newtheorem{obs}[thm]{Observation}
\newtheorem{defi}[thm]{Definition}
\theoremstyle{definition}
\newtheorem*{remark*}{Remark}
\def\F{\mbox{\ensuremath{\mathcal F}}\xspace}
\def\D{\mbox{\ensuremath{\mathcal D}}\xspace}
\begin{document}

\title{Aligned plane drawings of the generalized Delaunay-graphs for pseudo-disks}
%
%
\author{Bal\'azs Keszegh\thanks{Alfr\'ed R{\'e}nyi Institute of Mathematics and MTA-ELTE Lend\"ulet Combinatorial Geometry Research Group. Research supported by the National Research, Development and Innovation Office -- NKFIH under the grant K 116769 and by the Lend\"ulet program of the Hungarian Academy of Sciences (MTA), under grant number LP2017-19/2017.} \and
	D\"om\"ot\"or P\'alv\"olgyi\thanks{MTA-ELTE Lend\"ulet Combinatorial Geometry Research Group. Research supported by the Lend\"ulet program of the Hungarian Academy of Sciences (MTA), under grant number LP2017-19/2017.}}
%
%
%
\maketitle              

\begin{abstract}
	We study general Delaunay-graphs, which are natural generalizations of Delaunay triangulations to arbitrary families, in particular to pseudo-disks.
We prove that for any finite pseudo-disk family and point set, there is a plane drawing of their Delaunay-graph such that every edge lies inside every pseudo-disk that contains its endpoints. 

\end{abstract}

\section{Introduction}\setcounter{footnote}{0}

Delaunay triangulations play a central role in discrete and computational geometry. In many applications, however, one needs to deal with a different topology which requires to substitute disks in the definition with another family. If this other family consists of the homothets\footnote{A homothetic copy of a set is its scaled and translated copy (rotations are not allowed).} of some convex shape, then most properties generalize in a straight-forward manner \cite{Ma2000}. In this paper we study what happens when this is not the case, more precisely, we study the problem for families of (possibly non-convex) \emph{pseudo-disks}. Now we make the exact definitions.

\begin{defi}
Given a finite set of points $S$ and a family of regions $\F$, for a region $F\in \F$ we denote by $H_F=S\cap F$, the subset of $S$ defined by $F$. The vertices of the \emph{Delaunay-hypergraph} of $S$ with respect to $\F$ correspond to the points of $S$ and its hyperedges are $\{H_F:F\in\F\}$ (with multiplicities removed). The Delaunay-graph $\D({S,\F})$ is the graph formed by the size-$2$ hyperedges of this hypergraph.
\end{defi}

Thus, the vertices of the \emph{Delaunay-graph $\D({S,\F})$ of $S$ with respect to $\F$} correspond to the points of $S$, and two vertices $p,q\in S$ are connected by an edge if there is an $F\in \F$ such that $H_F=S\cap F=\{p,q\}$.
If $S\subset \mathbb R^2$ and $\F$ is the family of disks, this gives back the usual definition of Delaunay triangulations (when no four points from $S$ are on a circle).
It is well-known that this graph with respect to disks is planar, moreover, drawing its edges as straight-line segments we get a plane drawing in which the drawing of an edge $pq$ lies inside every disk containing both $p$ and $q$.
This is true also when the regions are the homothets of some convex region, or more generally, when $\F$ is a pseudo-disk arrangement containing only convex regions (as we will soon see).

\begin{defi}
	A Jordan region is a (simply connected) closed bounded region whose
	boundary is a closed simple Jordan curve.
	A family of Jordan regions is called a \emph{family of pseudo-disks} if the boundaries of every	pair of the regions intersect in at most two points.
\end{defi}

For an example of a pseudo-disk family see Figure \ref{fig:example}.

For points with respect to pseudo-disks, if we draw the edges in an arbitrary way inside one of their defining pseudo-disks (that is, the edge connecting points $p$ and $q$ is drawn inside an $F\in \F$ for which $H_F=\{p,q\}$), we get a drawing in which non-adjacent edges intersect an even number of times, using the following simple lemma (Lemma 1 in \cite{buzaglo}).

\begin{lem}\cite{buzaglo}\label{lem:disjointedges}
	Let $D_1$ and $D_2$ be two pseudo-disks in the plane. Let $x$ and $y$ be two
	points in $D_1\setminus D_2$. Let $a$ and $b$ be two points in $D_2\setminus D_1$. Let $e$ be any Jordan
	arc connecting $x$ and $y$ that is fully contained in $D_1$. Let $f$ be any Jordan arc connecting $a$ and $b$
	that is fully contained in $D_2$. Then $e$ and $f$ cross an even number of times.
\end{lem}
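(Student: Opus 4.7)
The plan is to slice the arc $f$ into pieces by the boundary of $D_1$, show that each piece separates $D_1$ into two regions with $x$ and $y$ on the same side, and conclude by counting crossings modulo $2$ one piece at a time.

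First I would reduce to generic position by a small perturbation, so that $e$ and $f$, and also $f$ and $\partial D_1$, meet transversally in finitely many points (parities are preserved). Since the endpoints $a,b$ of $f$ lie outside $D_1$, the arc $f$ crosses $\partial D_1$ an even number of times, so $f\cap D_1$ decomposes as a disjoint union of sub-arcs $f_1,\dots,f_k$, each with both endpoints on $\partial D_1$ and relative interior in $\mathrm{int}(D_1)$. Each $f_i$ is a Jordan arc that, completed with an arc of $\partial D_1$, forms a Jordan curve, so it separates $D_1$ into exactly two connected regions.

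The heart of the argument is to show that $x$ and $y$ lie in the \emph{same} component of $D_1\setminus f_i$ for every $i$. For this I would invoke the pseudo-disk hypothesis via the following structural fact: since $|\partial D_1\cap\partial D_2|\le 2$, the set $D_1\setminus D_2$ is path-connected. If $\partial D_1$ and $\partial D_2$ meet in two points $p,q$, these split each boundary into two arcs, and $D_1\setminus\mathrm{int}(D_2)$ is precisely the Jordan region bounded by one arc from each (the cases of at most one intersection are easier or vacuous). As $x,y\in D_1\setminus D_2$, there is a path between them contained in $D_1\setminus D_2$, and this path avoids $f_i\subseteq D_2$. Hence $x$ and $y$ are in the same component of $D_1\setminus f_i$, so the arc $e$, which joins them inside $D_1$, must cross $f_i$ an even number of times. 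Summing over $i=1,\dots,k$ yields that $|e\cap f|$ is even.

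The main obstacle is the structural claim that $D_1\setminus D_2$ is path-connected; this is exactly where the pseudo-disk hypothesis is used and cannot be dropped, since without the two-intersection restriction $D_2$ could poke through $D_1$ in a way that separates $x$ from $y$ within $D_1\setminus D_2$, and the parity conclusion would genuinely fail. Verifying the connectivity in the two-crossing case is a standard Jordan-curve argument, but I would take care to lay it out cleanly since the rest of the proof rests on it.
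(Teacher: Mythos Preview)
The paper does not prove this lemma; it is quoted from \cite{buzaglo} and used as a black box, so there is no in-paper argument to compare against. Your proposal is nonetheless a correct proof, and the crucial step---that $D_1\setminus D_2$ is path-connected because $\partial D_1\cap\partial D_2$ has at most two points---is exactly the place where the pseudo-disk hypothesis enters, as you note. One small point to tidy: when you argue that $e$ crosses each chord $f_i$ evenly, you are implicitly using that $e\subset D_1$ cannot cross the complementary boundary arc of $\partial D_1$ that closes $f_i$ into a Jordan curve; under your transversality reduction $e$ lies in $\mathrm{int}(D_1)$ away from its endpoints, so this is fine, but it is worth saying explicitly.

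For comparison, the argument in \cite{buzaglo} (and the one most often quoted) runs dually: since $D_2\setminus D_1$ is connected, replace $f$ by a path $f'\subset D_2\setminus D_1$ with the same endpoints; then $f$ and $f'$ are homotopic rel endpoints inside the simply connected region $D_2$, while $x,y\notin D_2$, so $e$ has the same crossing parity with $f$ as with $f'$, and the latter is zero because $f'\cap D_1=\emptyset$. Your chord-by-chord count and this homotopy argument are two packagings of the same connectivity fact; yours is slightly more hands-on, theirs slightly slicker, but neither gains anything the other lacks.
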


The Hanani-Tutte theorem then implies that the Delaunay-graph of points with respect to pseudo-disks is planar. 

If we additionally assume that the pseudo-disks in the family are all convex, then just like in the case of disks and homothets of a convex region, we can draw the edges as straight-line segments. As the regions are convex, the drawing of an edge $pq$ indeed lies inside every pseudo-disk containing both $p$ and $q$. Furthermore, two adjacent edges never intersect, while non-adjacent edges could intersect at most once, and by Lemma \ref{lem:disjointedges} an even number of times, thus they also never intersect. Thus, this is a plane drawing of the Delaunay-graph. 
To summarize, this proves the following.

\begin{thm}\label{thm:classic}
Given a pseudo-disk family $\F$ which contains only convex pseudo-disks and given a finite point set $S$, the straight-line drawing of the Delaunay-graph of $S$ with respect to $\F$ is a planar graph (and every drawn edge $pq$ lies inside every pseudo-disk containing both $p$ and $q$, by their convexity).
\end{thm}

This was shown already by Matou\v{s}ek et al.~\cite{MSW90} who actually defined pseudo-disks differently, and required them to be convex (for the special case of homothets of a given convex shape see also \cite{BCCS}). 

\begin{defi}
Given a pseudo-disk family $\cal F$, if a drawing of a graph on vertex set $S$ has the property that every drawn edge $pq$ lies inside each pseudo-disk that contains both $p$ and $q$ (and possibly other points of $S$ as well), then we say that the drawing of the graph is \emph{aligned} with $\F$. 	
\end{defi}

A drawing of a Delaunay-graph of $S$ with respect to $\F$ is trivially aligned with $\F$ whenever the regions of $\F$ are convex and the edges are drawn as straight-line segments. On the other hand, when the regions are not necessarily convex, then a straight-line drawing of the edges usually is not an aligned drawing; when a region is not convex, then a straight-line segment connecting two points inside it may not be fully contained in the region. 

The aim of this paper is to prove that we can also get a plane drawing aligned with $\F$ even when the pseudo-disks are not necessarily convex.

\begin{figure}[t]
	\centering
	\includegraphics[height=4cm]{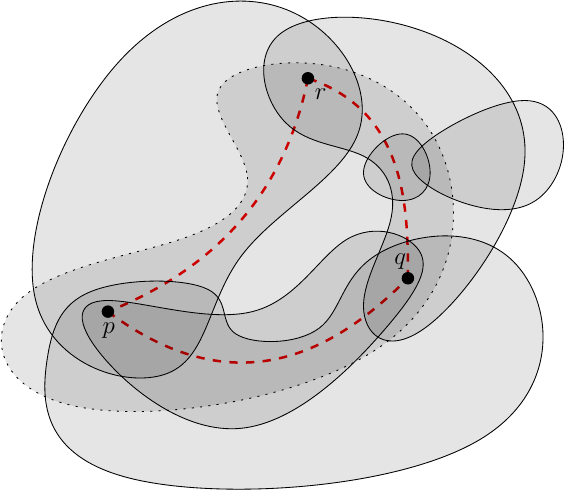}
	\caption{Aligned drawing of the Delaunay-graph of a $3$-element point set with respect to a pseudo-disk family.}
	\label{fig:example}	
\end{figure}

\begin{thm}\label{thm:planedelgraph}
	Suppose that we are given a finite pseudo-disk family $\F$ and a finite point set $S$, such that no point from $S$ is on the boundary of a pseudo-disk from $\F$.
	Then there is a plane drawing of the Delaunay-graph of $S$ with respect to $\F$ such that every edge $pq$ lies inside every pseudo-disk containing both $p$ and $q$.
\end{thm}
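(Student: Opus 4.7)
My plan is to attach to every Delaunay edge $pq$ its \emph{containment region} $K_{pq}:=\bigcap\{F\in\F:p,q\in F\}$, draw each edge as an arc inside $K_{pq}$, and then remove crossings while keeping every arc inside its region. The Delaunay condition gives a witness $F^*_{pq}\in\F$ with $S\cap F^*_{pq}=\{p,q\}$, so $K_{pq}\subseteq F^*_{pq}$ and $K_{pq}$ contains no point of $S$ except $p$ and $q$.

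First, and this is the main obstacle, I would show that $K_{pq}$ contains an arc from $p$ to $q$. The pairwise case is easy: since $\partial F$ and $\partial F^*_{pq}$ meet in at most two points and $p,q\in F\cap F^*_{pq}$, the intersection $F\cap F^*_{pq}$ is a simply connected lens still containing both $p$ and $q$. The inductive step peels off the pseudo-disks of $\F_{pq}:=\{F\in\F:p,q\in F\}$ one at a time, maintaining the invariant that the current intersection is a simply connected region $A\subseteq F^*_{pq}$ containing $p,q$, and arguing that a further pseudo-disk $F\in\F_{pq}$ intersects $A$ along a single arc separating $A$ into two pieces, with $p,q$ on the same (inside-$F$) side. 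This uses that $\partial F\cap A$ consists of sub-arcs whose endpoints lie on boundaries of previously processed pseudo-disks, again subject to the two-intersection rule. Having established path-connectedness, pick an arbitrary arc $\gamma_{pq}\subseteq K_{pq}$ from $p$ to $q$.

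For any two non-adjacent Delaunay edges $pq,rs$, the witness property gives $r,s\notin F^*_{pq}$ and $p,q\notin F^*_{rs}$, while $\gamma_{pq}\subseteq F^*_{pq}$ and $\gamma_{rs}\subseteq F^*_{rs}$, so Lemma~\ref{lem:disjointedges} yields an even number of crossings. To upgrade this even drawing to a plane one, I would take, among all admissible families $(\gamma_{pq})$ with $\gamma_{pq}\subseteq K_{pq}$, one with the minimum total number of crossings and show this minimum is $0$. If not, some pair $\gamma_{pq},\gamma_{rs}$ cross at least twice; take an innermost bigon $B$ and perform the standard swap along $\partial B$, losing those two crossings. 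The remaining technical point is to check that $B\subseteq K_{pq}\cap K_{rs}$ (so the swap preserves containment) and that innermost-ness plus the even parity of crossings of each third curve $\gamma_{uv}$ with $\partial B$ forbids a net gain of new crossings with third edges. These together contradict minimality, yielding the required plane drawing.
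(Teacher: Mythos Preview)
Your plan has two genuine gaps, and the second one looks fatal. First, you never treat \emph{adjacent} edges: Lemma~\ref{lem:disjointedges} requires $x,y\in D_1\setminus D_2$ and $a,b\in D_2\setminus D_1$, so it says nothing about $\gamma_{pq}$ and $\gamma_{pr}$; these can cross an odd number of times in your initial drawing and your minimality argument does not address them. Second, and more seriously, the bigon swap need not preserve the containment constraint. After the swap, the new $\gamma'_{pq}$ runs along (a push-off of) the arc $\beta\subseteq\gamma_{rs}$ bounding the bigon, so you need $\beta\subseteq K_{pq}$, i.e.\ $\beta\subseteq F$ for every $F\in\F$ with $p,q\in F$. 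But all you know is $\beta\subseteq K_{rs}\subseteq F^*_{rs}$ and that the two endpoints of $\beta$ lie in $K_{pq}\subseteq F$; since $\partial F\cap F^*_{rs}$ is a single arc that $\beta$ may cross arbitrarily many times, $\beta$ can leave and re-enter $F$ between the bigon corners. So ``$B\subseteq K_{pq}\cap K_{rs}$'' is not a technicality to check --- it is the whole difficulty, and in general it fails. (Your inductive argument for path-connectedness of $K_{pq}$ is also only a sketch: the step ``$\partial F$ meets $A$ in a single arc'' is false in general, as $\partial A$ is built from several pseudo-disk boundaries, each of which $\partial F$ may cross twice. The conclusion you need --- an arc from $p$ to $q$ inside $K_{pq}$ --- is true, but it essentially amounts to Lemma~\ref{lem:ray}.)

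The paper takes a completely different route that sidesteps both problems. It first \emph{shrinks} $\F$ (by repeatedly removing containment-minimal empty lenses) to a family $\F'$ that \emph{respects} $S$, so that each Delaunay pair $\{p,q\}$ has a unique witnessing pseudo-disk $F'_{pq}$ and these witnesses are pairwise disjoint, nested, or overlapping in a controlled way. Then edges are drawn one at a time: Lemma~\ref{lem:transversaledge} produces an arc in $F'_{xy}$ crossing every other boundary at most once, and the new arc is rerouted along already-drawn \emph{adjacent} edges (the only ones it can meet inside $F'_{xy}$) to eliminate crossings while maintaining the ``at most one boundary crossing'' invariant. This invariant is exactly what guarantees the rerouted arc stays inside $F'_{xy}$ --- the ingredient your bigon argument lacks.
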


From now on we always assume (and maintain) that whenever a pseudo-disk family and a finite point set is given, then no point from the point set is on the boundary of a pseudo-disk from the family.

For an example for Theorem \ref{thm:planedelgraph}, see Figure \ref{fig:example}. Notice that in the example the edge connecting $p$ and $q$ must lie inside the two pseudo-disks that contain $p,q$ (but do not contain $r$), and also inside the pseudo-disk containing all three points (drawn with a dotted boundary in Figure \ref{fig:example}). 
Also notice that the edge connecting $r$ and $q$ cannot avoid going through a pseudo-disk which contains none of $r,q$. This shows that Theorem \ref{thm:planedelgraph} cannot be strengthened to additionally require that edges should be disjoint from pseudo-disks containing none of their endpoints (as is required, e.g., in \emph{clustered planarity} for clusters \cite{CB05}).

We expect that apart from the theoretical and esthetical interest, 
Theorem \ref{thm:planedelgraph} can be useful in several applications.
In fact, our motivation to study the problem came from the fact that this was exactly the lemma we needed in a recent joint result with Ackerman \cite{tuple} about certain colorings of the edges of the Delaunay-graph.

A quite similar problem have been studied by Kratochv\'il and Ueckerdt \cite{KU13}.
The main difference between our paper and \cite{KU13} is that the (not necessarily two-element) point sets that they are trying to connect inside the pseudo-disks are required to be disjoint, i.e., for the point set $P_i$ inside disk $D_i$ and for the point set $P_j$ inside disk $D_j$, they have $P_i\cap P_j=\emptyset$ for every $i$ and $j$.
They prove that under these conditions such pairwise non-crossing connecting curves $P_i\subset\gamma_i\subset D_i$ always exist.
At the end of their paper they propose to relax the condition $P_i\cap P_j=\emptyset$, and claim that then non-crossing connecting curves might not exist; this is because in their version points from $P_i$ can be arbitrarily contained in other $D_j$.
We mention that there are other papers that study drawing edges inside certain restricted regions, e.g., Silveira, Speckmann and Verbeek \cite{SSV17}.

We also note that this paper does not aim to present the most concise proof of our main result, instead it tries to be as self-contained as possible, exposing lemmas which we find useful along the way. Additionally, we made sure not to use implicitly any extra conditions about how nice the drawings of the pseudo-disks are (as it happens sometimes in the literature).


\section{Proof of Theorem \ref{thm:planedelgraph}}\label{sec:proof}

We will need the following lemma:

\begin{lem}\label{lem:ray}
	Given a finite family of pseudo-disks such that all pseudo-disks contain a common point $p$, any point $q$ can be connected by a Jordan curve to $p$ such that this curve does not intersect the boundary of pseudo-disks containing $q$ (besides $p$) and intersects exactly once the boundary of pseudo-disks that do not contain $q$ (but contain $p$).
\end{lem}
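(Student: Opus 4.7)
The plan is to proceed by induction on $|\F|$. For the base case $|\F|=0$, any Jordan curve from $p$ to $q$ works. For the inductive step, I would remove a pseudo-disk $F_0\in\F$, apply the induction hypothesis to $\F\setminus\{F_0\}$ to obtain a Jordan curve $\gamma'$ from $p$ to $q$ satisfying the conclusion for $\F\setminus\{F_0\}$, and then modify $\gamma'$ near $\partial F_0$ so that the resulting curve $\gamma$ crosses $\partial F_0$ in zero points (if $q\in F_0$) or in exactly one point (if $q\notin F_0$).

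Since $p\in F_0$, the curve $\gamma'$ starts inside $F_0$, so $|\gamma'\cap\partial F_0|$ already has the correct parity (even if $q\in F_0$, odd otherwise). If the count equals the target we just set $\gamma=\gamma'$. Otherwise I perform local surgery: pick two consecutive crossings $x_1,x_2$ of $\gamma'$ with $\partial F_0$ along the curve and replace the arc $\gamma'[x_1,x_2]$, which lies on one side of $\partial F_0$, by an arc obtained as a slight perturbation of one of the two arcs of $\partial F_0$ between $x_1$ and $x_2$ on the opposite side. Each such replacement decreases $|\gamma\cap\partial F_0|$ by $2$; iterating drives the count down to the target.

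The hard part will be ensuring that the surgery preserves the crossing conditions for every other $\partial F$ with $F\in\F\setminus\{F_0\}$. The key pseudo-disk property $|\partial F\cap\partial F_0|\le 2$ controls the damage: each of the two arcs of $\partial F_0$ between $x_1$ and $x_2$ meets $\partial F$ in at most two points, and parity of crossings is preserved because the discarded and inserted arcs together form a closed curve, which $\partial F$ crosses an even number of times. For $F\in\F\setminus\{F_0\}$ with $q\in F$, both $x_1$ and $x_2$ lie in $F$, and since $\partial F_0\cap F$ is a single arc on $\partial F_0$ containing $x_1,x_2$, one of the two arcs of $\partial F_0$ between them lies inside $F$ and contributes no crossings with $\partial F$; this is the arc I would use to avoid creating new crossings with $\partial F$.

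The remaining difficulty is to choose a single arc of $\partial F_0$ that is simultaneously good for all $F\in\F\setminus\{F_0\}$, since the "good" arc may a priori differ between different $F$'s. I plan to handle this by being free in the choice of the pair $(x_1,x_2)$ (taking, e.g., an \emph{innermost} pair with respect to the arrangement, so that the surgery region contains no other intersection of $\gamma'$ with $\bigcup_{F\ne F_0}\partial F$ close to $\partial F_0$) and by running an outer induction on a weighted potential $\phi(\gamma)=\sum_{q\in F}|\gamma\cap\partial F|+\sum_{q\notin F}(|\gamma\cap\partial F|-1)$. Each local surgery will be designed to strictly decrease $\phi$, using the pseudo-disk bound $|\partial F\cap\partial F_0|\le 2$ together with the parity argument to bound the compensating changes to crossings with other $\partial F$. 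Once $\phi(\gamma)=0$, the curve satisfies the lemma.
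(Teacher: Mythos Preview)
The paper does not actually prove Lemma~\ref{lem:ray}. It cites three external sources: the star-shaped redrawing of \cite{lenses} (which needs a general-position assumption), the sweeping theorem of Snoeyink--Hershberger \cite{SH89}, and the self-contained argument in \cite{anchored}. So there is no ``paper's own proof'' to match; your proposal should be judged on its own.

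Your inductive surgery idea is natural, but the crucial step---that the potential $\phi$ strictly decreases---is not justified, and as written it can fail. A single surgery replaces $A=\gamma'[x_1,x_2]$ by a perturbed arc $B$ of $\partial F_0$. This drops $|\gamma\cap\partial F_0|$ by $2$. For every other $F$, parity is preserved, but the \emph{count} can go up: if both points of $\partial F\cap\partial F_0$ lie on the chosen arc $\beta$, then $|B\cap\partial F|=2$ while $|A\cap\partial F|$ may be $0$, contributing $+2$ to $\phi$. Nothing prevents this from happening for many $F$'s simultaneously, so $\phi$ can increase. Your appeal to an ``innermost pair'' does not fix this: making $x_1,x_2$ consecutive along $\gamma'$ controls what $A$ meets, not what $\beta$ meets; making them consecutive along $\partial F_0$ (so that $\beta$ is clean) gives no control over how many times $\gamma'[x_1,x_2]$ crosses $\partial F_0$, so the surgery no longer removes two crossings. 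You also assert that for $q\in F$ the ``good'' arc of $\partial F_0$ lies inside $F$, but you then need one arc that is simultaneously good for all such $F$, and you give no argument for that (indeed, the intersection $\partial F_0\cap F_1\cap F_2$ can be disconnected).

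In short, the missing idea is exactly the global control that the cited proofs supply: \cite{lenses} and \cite{SH89} produce, in effect, a single ray-like curve in one shot (via a star-shaped or sweepable redrawing), rather than trying to repair one boundary at a time. If you want to push the inductive route, you would need a much sharper choice of $F_0$ and of the surgery arc, together with a real proof that $\phi$ decreases; as it stands, the argument has a genuine gap.
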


A family of pseudo-disks is in \emph{general position} if there are no three pseudo-disks whose boundaries intersect in a common point.

In \cite{lenses} it is proved that for a finite family of pseudo-disks in general position and all containing a common point $p$ there exists a combinatorially equivalent family of pseudo-disks, all of which are star-shaped\footnote{A region is star-shaped with respect to $p$ if every line through $p$ intersects the region in a segment containing $p$.} with respect to $p$. This clearly implies the above lemma when the pseudo-disk family is in general position. We will see later that actually this is enough for us whenever we will apply this lemma. Alternately, appropriate application of the Sweeping theorem of Snoeyink and Hershberger \cite{SH89} also implies Lemma \ref{lem:ray} without assuming general position, yet the proof of their theorem is quite involved. Finally, in \cite{anchored,anchoredArxiv} a relatively simple self-contained proof of Lemma \ref{lem:ray} is shown, again without assuming general position. 

We need some further definitions and lemmas before we can prove our main result.  From now on every point set $S$ we consider is finite, even if we do not always emphasize this.

\begin{defi}
	Given two pseudo-disks whose boundaries intersect (that is, two Jordan regions whose boundaries intersect twice), after removing their boundaries the plane is split into three bounded and one unbounded region\footnote{This is implied by the Jordan curve theorem.}. We call the bounded regions the three \emph{lenses} defined by these two pseudo-disks. If a set of points $S$ is given, we say that \emph{a lens is empty} if it does not contain any point from $S$.
\end{defi}

\begin{defi}
	In a pseudo-disk family $\F$, replacing a pseudo-disk $F$ with some $F'\subsetneq F$ such that the new family is still a pseudo-disk family, is called a \emph{(geometric)\footnote{The word `geometric' in the definition (which we will omit in the rest of the paper) emphasizes here the difference from a similar definition (e.g., \cite{tuple}) where only $F'\cap S\subset F\cap S$ is required instead of $F'\subset F$; see also Section \ref{sec:planarsupports}.} shrinking} of $F$ to $F'$. If such an $F'$ already exists in $\F\setminus \{F\}$, then simply deleting $F$ from $\F$ is also called a shrinking of $F$ to $F'$. Given a point set $S$, such a shrinking is \emph{hypergraph preserving} on $S$ if $F'\cap S=F\cap S$. 
	
	Applying shrinking steps to multiple members of the family $\F$ after each other is called a \emph{shrinking} of $\F$.
	A shrinking of $\F$ is \emph{hypergraph preserving} if all shrinking steps are hypergraph preserving.
\end{defi}

\begin{obs}
	If we do a hypergraph preserving shrinking on $\F$ to get $\F'$, then by definition in each step if $F\in \F$ is shrunk to $F'\in \F'$, then we have $H_F=H_{F'}$ and thus the Delaunay-hypergraph of $S$ with respect to $\F$ is the same as that of $S$ with respect to $\F'$. That is, as its name suggests, a hypergraph preserving shrinking does indeed preserve the Delaunay-hypergraph of $S$ with respect to $\F$.
\end{obs}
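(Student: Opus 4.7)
The plan is to unpack the definitions and induct on the number of elementary shrinking steps, since the statement is essentially a bookkeeping check once one spells out what ``hypergraph preserving'' means. I treat the geometric hypergraph of $S$ with respect to $\F$ as the collection of traces $\{F\cap S : F\in\F\}$, so ``same hypergraph'' means this collection is the same for $\F$ and $\F'$.

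For the single-step case, suppose the step replaces $F\in\F$ by some $F'\subsetneq F$, or in the degenerate case deletes $F$ because some $F''\in\F\setminus\{F\}$ already satisfies $F''\cap S=F\cap S$. Since the step is hypergraph preserving, we have $H_F=F\cap S=F'\cap S=H_{F'}$ in the first case, and $H_F=H_{F''}$ in the second. Every other pseudo-disk in $\F$ is untouched, so its trace on $S$ is unchanged. Hence the only trace that could have changed is $H_F$, which is either replaced by $H_{F'}=H_F$ or deleted but already represented by $H_{F''}$ in $\F\setminus\{F\}\subseteq\F'$. In either case the collection $\{G\cap S : G\in\F'\}$ equals $\{G\cap S : G\in\F\}$.

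Then I induct on the number of elementary shrinking steps composing the full shrinking. The base case of zero steps is immediate since $\F=\F'$. For the inductive step, let $\F''$ be the family obtained after all but the last step; by the inductive hypothesis $\F$ and $\F''$ induce the same geometric hypergraph on $S$, and by the single-step argument applied to the last step $\F''$ and $\F'$ induce the same hypergraph, so $\F$ and $\F'$ do as well. The only subtlety, and not really an obstacle, is the deletion case: the multiset of traces may shed a duplicate, but the hypergraph (viewed as a set of subsets of $S$) is insensitive to this, and the definition of shrinking via deletion explicitly guarantees that the trace being removed is already realized by another member of the family.
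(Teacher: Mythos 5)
Your proof is correct and matches the paper's approach: the paper treats this Observation as an immediate definitional unpacking (the one-sentence justification is embedded in the statement itself), and your argument simply makes explicit the induction over elementary steps and the harmless deletion case. Nothing is missing.
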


\begin{lem}\label{lem:emptylens}
	Given a point set $S$ and a finite family $\F$ of pseudo-disks, suppose that there is a containment-minimal empty lens $L$ defined by a pair of pseudo-disks $F_1,F_2\in \F$ with $L\subset F_1$, then we can apply a hypergraph preserving shrinking on $\F$ to obtain a pseudo-disk family which has strictly fewer number of intersections of the boundaries.
\end{lem}

\begin{proof}	
	The main idea of the proof is ``to get rid of'' $L$. Such a removal of an empty lens is quite standard and the reader may want to skip the rest of the proof. As the aim of this paper is to give tools that can be used safely for any family of pseudo-disks, we decided to include a proof that takes care of topologically non-intuitive cases as well.
	
	We prove that we can shrink $F_1$ to some $F_1'$ such that $\F'=\F\setminus \{F_1\}\cup \{F_1'\}$ is a pseudo-disk family, $F_1'\cap S=F_1\cap S$ (that is, shrinking $F_1$ to $F_1'$ is hypergraph preserving) and $F_1'\cap L=\emptyset$ (that is, we got rid of $L$).  

	Let $l_1$ (resp.\ $l_2$) be the maximal curve that is on the boundary of both $F_1$ (resp.\ $F_2$) and $L$. There might be some (empty) pseudo-disks that lie completely inside $L$. We claim that every maximal curve inside $L$ which is part of a boundary of some pseudo-disk different from $F_1$ and $F_2$ and is not completely inside $L$, has one endpoint on $l_1$ and another on $l_2$. Indeed, if such a maximal curve on the boundary of some $F_3$ would have both endpoints on $l_1$ (resp.\ $l_2$), then $F_1$ (resp.\ $F_2$) and $F_3$ would define a lens which lies inside $L$ contradicting its containment minimality.
	
\begin{figure}[t]
	\centering
	\includegraphics[height=4cm]{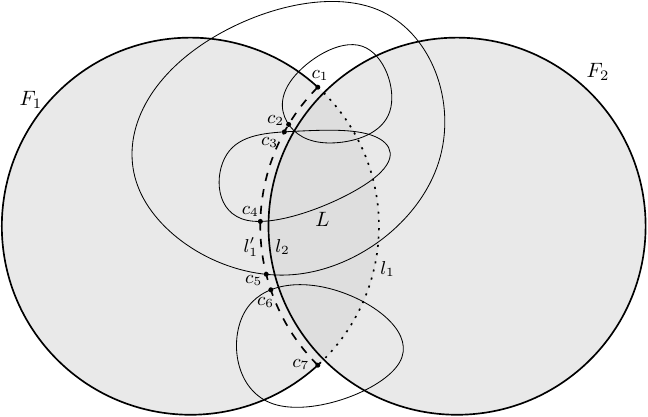}
	\caption{Removing the lens $L=F_1\cap F_2$ from $F_1$.}
	\label{fig:lensremoval}	
\end{figure}

\begin{figure}[t]
	\centering
	\includegraphics[height=4cm]{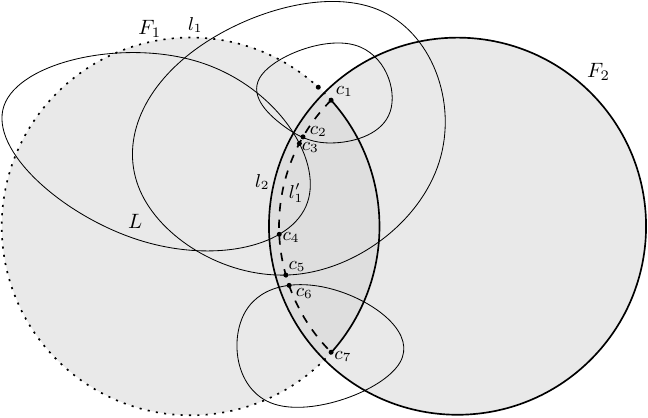}
	\caption{Removing the lens $L=F_1\setminus F_2$ from $F_1$.}
	\label{fig:lensremovalb}	
\end{figure}
	
	Now we are ready to shrink $F_1$. Basically we want to delete $L$ from $F_1$, but we have to shrink it a bit afterwards to avoid the introduction of common boundary parts. This turns out to be a bit technical;\footnote{It would be convenient to delete from $F_1$ an $\epsilon$-expansion of $L$ but this would not always work. For example it is possible that the boundary of another pseudo-disk makes infinitely many smaller and smaller ``squiggles'' (like $x\sin\frac 1x$) before intersecting $L$ and thus intersects every $\epsilon$-expansion of $L$ (where $\epsilon<\epsilon_0$ for some small $\epsilon_0$) more than twice.} there are two cases (when $L=F_1\cap F_2$ and when $L=F_1\setminus F_2$), yet luckily we can handle both of them exactly the same way (see Figure \ref{fig:lensremoval} and \ref{fig:lensremovalb}). In order to do that we will define a curve inside $F_1\setminus L$ which intersects the boundary of every pseudo-disk the same number of times as $l_2$ does. Then we shrink $F_1\setminus L$ by essentially replacing $l_2$ by this curve on the boundary. Next we give the details of how we do all of this.	
	
	Given a finite family of Jordan regions whose boundaries intersect finite many times, the \emph{vertices} of the \emph{arrangement determined by these regions} are the intersection points of the boundaries of the regions, the \emph{edges} are the maximal connected parts of the boundaries of the regions that do not contain a vertex, and the \emph{cells} are the maximal connected parts of the plane which are disjoint from the edges and the vertices of the arrangement.

	We take the arrangement determined by the boundaries of the given family of pseudo-disks.	
	Consider the vertices (that is, intersection points) of the arrangement that are on $l_2$. For every such vertex and every edge incident to this vertex and lying inside $F_1\setminus L$, but not on $l_2$, we take a small part of that edge ending in this vertex and call this a half-edge. These half-edges can be ordered naturally first according to the order of their endpoints on $l_2$, second for two half-edges sharing an endpoint we order them according to their rotation order around this vertex. For every consecutive pair of half-edges in this ordering there is also a unique cell of the arrangement that is `between' these half-edges in this ordering.
	
	Thus we get a natural ordering of the half-edges, $e_1,e_2,\dots e_s$. Notice that the first and last half-edges lie on the boundary of $F_1\setminus L$. Now for every $e_i$ choose an arbitrary point $c_i$ on it and connect for every $1\le i\le s-1$ the points $c_i$ and $c_{i+1}$ by a curve lying inside the cell that lies between them. While we need to draw several curves inside one cell (e.g., the curves connecting $c_1$ with $c_2$ and $c_4$ with $c_5$ in Figure \ref{fig:lensremoval}), we can draw these curves such that no two of them intersect, as we can draw the curve connecting $c_i$ and $c_{i+1}$ close to their half-edges and (if they do not share an endpoint) the part of $l_2$ separating their endpoints. 
	The union of all the curves is a curve $l_1'$ that connects $c_1$ to $c_s$.
	
	Let $F_1'$ be the region whose boundary consists of $l_1'$ and the boundary part of $F_1$ from $c_1$ to $c_s$ which is disjoint from $l_2$.
	Clearly we can draw the curves forming $l_1'$ such that at the end $F_1'\cap S=F_1\cap S$.

	Having defined $F_1'$, the shrinking of $F_1$, we are left to prove that it has the properties we required.
	
	Clearly, $F_1'\cap L=\emptyset$ and we made sure that $F_1'\cap S=F_1\cap S$. So we need to show only that the new family is also a pseudo-disk family and has strictly fewer intersections between the boundaries of its members.
	
	Consider now an intersection of $l_1'$ with the boundary of some $F_3$. By definition, it must be $c_i$ for some $2\le i\le s-1$. The half-edge $\gamma_i$ containing $c_i$ has an endpoint on $l_2$, which is an intersection point of the boundaries of $F_3$ and $F_2$. Using our observation from the beginning of the proof, the maximal curve inside $L$ whose starting point is this intersection has the other endpoint on $l_1$ and is an intersection point of the boundary of $F_1$ with the boundary of $F_3$. Arguing now in the opposite direction, for every intersection point of $l_1$ with a boundary of some $F_3$ there is a corresponding intersection point on $l_2$ and then also on $l_1'$ with the boundary of $F_3$. We conclude that the intersection points on the boundary of $F_1$ and $F_1'$ are in bijection except for the two intersection points of the boundaries of $F_1$ and $F_2$ as the boundaries of $F_1'$ and $F_2$ do not intersect. This implies that the family is still a pseudo-disk family and that the overall number of intersection points of boundaries decreased by $2$, finishing the proof.	
\end{proof}

\begin{defi}\label{def:respects}
	Given a point set $S$, we say that a pseudo-disk family  $\F$ \emph{respects} $S$ if for every pair of pseudo-disks $F_1,F_2\in \F$, 
	\begin{equation} \label{eq:respect1}
	\text{if } {F_1\cap S}\subseteq {F_2\cap S,} \text{ then } F_1\subseteq F_2 \text{ as well and}
	\end{equation}	
	\begin{equation} \label{eq:respect2}
	\text{if } {(F_1\cap S)\cap (F_2\cap S)}=\emptyset, 
	\text{ then } F_1\cap F_2=\emptyset \text{ as well.}
	\end{equation}
\end{defi}

\begin{obs}\label{obs:respectunique}
	If a pseudo-disk family $\F$ respects $S$, then by definition for every subset $S'\subseteq S$ there is at most one pseudo-disk $F$ such that $F\cap S=S'$.
\end{obs}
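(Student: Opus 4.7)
The plan is to unfold the definition directly: the observation follows in a single step from condition \eqref{eq:respect1} applied in both directions. Suppose two pseudo-disks $F_1, F_2 \in \F$ both satisfy $F_1 \cap S = F_2 \cap S = S'$ for some $S' \subseteq S$. Then trivially $F_1 \cap S \subseteq F_2 \cap S$, so \eqref{eq:respect1} yields $F_1 \subseteq F_2$. Interchanging the roles of $F_1$ and $F_2$ and applying \eqref{eq:respect1} again gives $F_2 \subseteq F_1$. Together these imply $F_1 = F_2$ as subsets of the plane, which proves the claimed uniqueness.

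I expect no genuine obstacle here. The author themselves flags the statement with the phrase ``by definition'', and indeed the content of the observation is nothing more than a symmetrization of \eqref{eq:respect1}. Note that condition \eqref{eq:respect2} is not used, and the argument works uniformly for every $S' \subseteq S$, including the empty subset; the only tacit assumption is the standard convention that two members of $\F$ which coincide as point sets in the plane are regarded as the same pseudo-disk.
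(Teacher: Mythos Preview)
Your argument is correct and essentially identical to the paper's own justification: apply condition~\eqref{eq:respect1} in both directions to conclude $F_1\subseteq F_2$ and $F_2\subseteq F_1$, hence $F_1=F_2$. Your additional remarks (that \eqref{eq:respect2} is not needed and that the argument covers $S'=\emptyset$) are accurate but not in the paper.
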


Indeed, if $F_1\cap S=F_2\cap S$, then by assumption (\ref{eq:respect1}) of Definition \ref{def:respects} both $F_1\subseteq F_2$ and $F_2\subseteq F_1$, that is $F_1=F_2$.
\begin{lem}\label{lem:respect}
	Given a point set $S$ and a finite pseudo-disk family $\F$, we can shrink $\F$ to get a finite pseudo-disk family $\F'$ such that 
	\begin{itemize}
		\item [(i)]
			this shrinking is hypergraph preserving on $S$,
		\item [(ii)]
			$\F'$ respects $S$ and
		\item [(iii)]
		 	no pair of pseudo-disks in $\F'$ defines an empty lens.
	\end{itemize}
	
\end{lem}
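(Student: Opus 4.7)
The plan is to iteratively eliminate empty lenses via Lemma~\ref{lem:emptylens} and then perform a short cleanup. While the current family contains any empty lens, I pick a containment-minimal such $L$ and apply Lemma~\ref{lem:emptylens} to shrink the pseudo-disk containing $L$. Each step is hypergraph preserving and, by the ``moreover'' clause of that lemma, strictly decreases the total number of pairwise boundary intersection points, a nonnegative integer; hence the process terminates in finitely many steps, yielding a family $\F'$ with no empty lens and the same trace on $S$ as $\F$.

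The bulk of the remaining work is a short case check showing that $\F'$ respects $S$. For (\ref{eq:respect1}), take $F_1,F_2\in \F'$ with $F_1\cap S\subseteq F_2\cap S$. If $\partial F_1$ and $\partial F_2$ crossed (necessarily in exactly two points), then $F_1\setminus F_2$ would be one of the three lenses they form, and its intersection with $S$ would equal $(F_1\cap S)\setminus(F_2\cap S)=\emptyset$, an empty lens---contradicting the first phase. So the boundaries do not cross, leaving the three possibilities $F_1\subseteq F_2$ (the desired conclusion), $F_2\subsetneq F_1$ forcing $F_1\cap S=F_2\cap S$, or $F_1\cap F_2=\emptyset$ forcing $F_1\cap S=\emptyset$. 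Condition (\ref{eq:respect2}) is handled dually: if the boundaries crossed, the lens $F_1\cap F_2$ would be empty; if they do not cross, the only remaining obstruction again reduces to a nested pair with one empty trace or with coincident traces.

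The two residual degenerate configurations are cleaned up by further hypergraph-preserving deletion steps. While some pair satisfies $F_2\subsetneq F_1$ with $F_1\cap S=F_2\cap S$, I delete $F_1$; this is a legal shrinking because $F_2\in \F'\setminus\{F_1\}$ is a proper subset of $F_1$ witnessing the same trace on $S$, and deletion can only destroy lenses, never create new empty ones, so the no-empty-lens invariant is preserved. The leftover pathology, a pseudo-disk with empty trace on $S$ that is disjoint from some other member, contributes only the empty hyperedge and is handled by an up-front pass that discards all members with empty trace (equivalently, by an analogous final cleanup). Both cleanup stages terminate since $|\F'|$ strictly decreases at each step. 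The main obstacle is the termination of the first phase, and this is precisely what the monovariant in Lemma~\ref{lem:emptylens} is engineered to provide; once that is in hand, the rest is a short combinatorial case check on pairs of pseudo-disks whose boundaries are pairwise disjoint.
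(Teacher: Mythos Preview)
Your proposal is correct and follows essentially the same route as the paper: iterate Lemma~\ref{lem:emptylens} on containment-minimal empty lenses (terminating via the decreasing boundary-intersection count), then delete the larger of any nested pair with identical trace, and finally observe that any violation of (\ref{eq:respect1}) or (\ref{eq:respect2}) would force either an empty lens or a duplicate trace. Your treatment is in fact slightly more explicit than the paper's in spelling out the three non-crossing configurations and in flagging the degenerate case of pseudo-disks with empty trace on $S$.
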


\begin{proof}
We keep applying Lemma \ref{lem:emptylens} to a containment-minimal empty lens until there are no more empty lenses. This is a finite process as in each step the number of intersections between boundaries of pseudo-disks decreases. By Lemma \ref{lem:emptylens} it follows that the new family is a pseudo-disk family and that this shrinking was hypergraph preserving on $S$.

Next, if there is a pair of pseudo-disks which intersect $S$ in the same subset $S'$, then since there are no empty lenses, one of them must be contained in the other.
The bigger one can be shrunk to the smaller, so we can delete it.
We keep doing this until for every $S'$ there is only at most one pseudo-disk that intersects $S$ in $S'$.

Finally, it is easy to see that if there was a pair of pseudo-disks in $\F'$ for which (\ref{eq:respect1}) or (\ref{eq:respect2}) from  Definition \ref{def:respects} did not hold, then either they would intersect $S$ in the same subset or one of the lenses they form would be empty, contradicting the fact that there were no more empty lenses in $\F'$.
\end{proof}

\begin{defi}
	Given a pseudo-disk family $\F$, the \emph{depth of a point} is the number of pseudo-disks containing this point. For a point with depth $d$ for some integer $d$ we say that it is \emph{$d$-deep}.
\end{defi}

\begin{remark*}
Notice that using Lemma \ref{lem:respect} from a pseudo-disk family $\F$ we can get another pseudo-disk family $\F'$ that defines the same hypergraph and in which drawing every Delaunay-edge arbitrarily inside its defining pseudo-disk (unique in $\F'$), disjoint Delaunay-edges are drawn without intersections, and this drawing is aligned with $\F'$ and also with $\F$. Thus, we are only left to deal with intersections between Delaunay-edges that share an edge. A possible solution, suggested by an anonymous reviewer, would be to split in some way the vertices (and shrinking the pseudo-disks appropriately) so that adjacent edges also become disjoint edges (stars became matchings in the Delaunay-graph) and apply Lemma \ref{lem:respect} to this to get a planar drawing, after which we somehow merge back the appropriate vertices (changing the drawing appropriately). Instead, we choose another (albeit possibly longer) route which is based on the forthcoming Lemma \ref{lem:transversaledge}, which strengthens Lemma \ref{lem:ray} in a specific setting (and indeed its proof uses Lemma \ref{lem:ray}) and which may be interesting in its own as well.
\end{remark*}

Before we present our second central lemma in proving Theorem \ref{thm:planedelgraph}, let us briefly discuss whether it can be assumed that the pseudo-disks are in general position.

It is easy to see that whenever in a pseudo-disk family $\F$ there are at least three pseudo-disk boundaries intersecting in a common point $p$, we can  get rid of this multiple intersection (and thus reduce the number of such intersections) by perturbing in an appropriately small disk around $p$ the pseudo-disk boundaries going through $p$ such that this perturbation is a shrinking, moreover if an $S$ is given, we can do this perturbation such that it is a hypergraph preserving shrinking. In fact this can be done similarly as we have removed an empty lens in the proof of Lemma \ref{lem:emptylens}, except that now $p$ plays the role of the empty lens.
Repeating such shrinking steps we can shrink $\F$ to another pseudo-disk family $\F'$ in which there are no three such pseudo-disks, that is, $\F'$ is in general position. If Theorem \ref{thm:planedelgraph} holds for $\F'$, then the same drawing implies that it also holds for $\F$. Thus, in the rest of the paper we can assume that the pseudo-disk families we deal with are in general position. In particular, whenever we apply Lemma \ref{lem:ray} we can just use it for a family in general position.

\begin{lem}\label{lem:transversaledge}
	We are given a point set $S$ and a family of pseudo-disks $\F$ that respects $S$ such that every pseudo-disk contains exactly two points from $S$. Given a pseudo-disk $F_{x,y}\in \F$ containing only the two points $x,y$ from $S$, we can draw a curve connecting $x$ and $y$ that lies completely inside $F_{x,y}$ and intersects the boundary of every other pseudo-disk at most once.
\end{lem}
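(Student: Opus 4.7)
The plan is to apply Lemma \ref{lem:ray} at the endpoint $x$ and then to repair the resulting curve by a lens-removal procedure so that the single-crossing property also holds for the pseudo-disks that contain $y$. As a first step I would use the respects-property of $\F$ to narrow down which pseudo-disks can actually interact with a curve drawn inside $F_{x,y}$. By Observation \ref{obs:respectunique}, $F_{x,y}$ is the unique pseudo-disk in $\F$ whose intersection with $S$ is $\{x,y\}$, so every other $F\in\F$ contains at most one of $x$ and $y$. If $F$ contains neither, then $(F\cap S)\cap(F_{x,y}\cap S)=\emptyset$, so applying~(\ref{eq:respect2}) to $F$ and $F_{x,y}$ yields $F\cap F_{x,y}=\emptyset$, and any curve drawn inside $F_{x,y}$ trivially avoids $\partial F$. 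Hence it suffices to control crossings with the two subfamilies $\F_x=\{F\in\F\setminus\{F_{x,y}\}:x\in F\}$ and $\F_y=\{F\in\F\setminus\{F_{x,y}\}:y\in F\}$.

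Next I would apply Lemma \ref{lem:ray} to $\F_x\cup\{F_{x,y}\}$, whose members all contain $x$, with common point $x$ and target $y$. Since $y\in F_{x,y}$ but $y\notin F$ for every $F\in\F_x$, the lemma produces a Jordan arc $\gamma$ from $x$ to $y$ that does not cross $\partial F_{x,y}$ (hence lies in $F_{x,y}$) and crosses $\partial F$ exactly once for every $F\in\F_x$.

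For each $F\in\F_y$ the curve $\gamma$ starts outside $F$ and ends inside it, so the number of $\gamma$-$\partial F$ crossings is odd and must be reduced to exactly one. While some $\partial F$ with $F\in\F_y$ is crossed more than once, I would take a containment-minimal \emph{bigon}: a Jordan region bounded by a sub-arc of $\gamma$ between two consecutive crossings with $\partial F$ together with the sub-arc of $\partial F$ joining them. The bigon is contained in $F_{x,y}$ because $\partial F\cap\partial F_{x,y}$ consists of only two points, so the portion of $\partial F$ reaching into $F_{x,y}$ is a single arc from boundary to boundary. I then reroute the $\gamma$-side of this bigon infinitesimally along the opposite side of the $\partial F$-side, removing those two crossings. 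By the pseudo-disk bound $|\partial F\cap\partial F'|\le 2$ for every other pseudo-disk $F'$, together with the minimality of the bigon (which prevents any other $\partial F'$ from forming a smaller bigon inside it), this reroute can be performed without increasing the number of crossings of $\gamma$ with any $\partial F'$; in particular, the single-crossing property already established for $\F_x$ is preserved, and the total $\gamma$-$\F_y$ crossing count strictly decreases. Iterating gives the desired curve.

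The principal obstacle is this repair step, which is essentially Lemma \ref{lem:emptylens} carried out with $\gamma$ playing the role of one of the bounding arcs. One must verify carefully that the short-cut stays inside $F_{x,y}$ and that no new crossings appear with other pseudo-disk boundaries; the minimality of the chosen bigon, the pseudo-disk property, and the fact that $\F$ has no empty lenses (because it respects $S$) are exactly what is needed to make the argument go through cleanly.
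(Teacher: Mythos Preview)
Your approach is genuinely different from the paper's and the overall strategy is sound, but the justification you give for the crucial repair step is misattributed. You apply Lemma~\ref{lem:ray} once, from $x$, and then iteratively remove excess crossings with members of $\F_y$ by rerouting along bigons. The paper instead proves a structural fact (a point in $F_{x,y}$ covered by pseudo-disks of both types has depth exactly $3$, lying in $F_{x,y}$, $F_{x,p}$, $F_{y,p}$ for a single $p$), splits into two cases, and applies Lemma~\ref{lem:ray} \emph{twice}, once from each endpoint, stitching the two arcs together at a carefully chosen meeting cell. No repair is needed. Your route is more algorithmic and reuses the spirit of Lemma~\ref{lem:emptylens}; the paper's route is more structural and yields the curve directly.

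The gap in your write-up is the reason you give for why the reroute does not spoil the single-crossing property with members of $\F_x$. You invoke ``minimality of the bigon'', but for $F'=F_{x,p}\in\F_x$ the curve $\gamma$ crosses $\partial F'$ exactly once, so there is no $\gamma$--$\partial F'$ bigon to be minimal about; minimality of the chosen $\gamma$--$\partial F$ bigon says nothing here. What actually rules out the bad case is the ingredient you only mention in your last paragraph: if $\partial F'$ met $\beta_B$ twice without meeting $\gamma_B$, then (since $|\partial F\cap\partial F'|\le 2$) those two points would be \emph{all} of $\partial F\cap\partial F'$, so the small cap cut off inside the bigon would be an entire lens of $F$ and $F'$; but every lens of $F=F_{y,p}$ and $F'=F_{x,p}$ contains one of $x$, $y$, or $p$, which forces an empty lens once you know the bigon misses $x$ and $y$. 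The same no-empty-lens argument (not bigon minimality) handles $F'\in\F_y$ in the $a=0$, $b=2$ case. For this to go through you must also argue that the chosen bigon can be taken to avoid $x$ and $y$, which you do not address; this is not automatic (consecutive crossings along $\gamma$ need not be consecutive along $\partial F$, so the naive bigon may swallow an endpoint), and it is exactly here that a careful choice of ``innermost'' is needed. Once these two points are made precise, your iterative scheme does terminate and gives the desired curve.
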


Before starting the proof of Lemma \ref{lem:transversaledge}, let us emphasize that it is necessary to assume that $\F$ respects $S$, otherwise the lemma is not true. Indeed, as we have already noted earlier, on Figure \ref{fig:example} if we connect $q$ and $r$ inside all the pseudo-disks that contain $q$ and $r$, then we must intersect the boundary of some pseudo-disk twice. While in this example not all pseudo-disks contain exactly two points, one can easily draw an example where this holds as well.

We remark that if in Lemma \ref{lem:transversaledge} we additionally assume that there are no empty lenses defined by the pseudo-disks of $\F$ then the boundaries of the pseudo-disks inside $F_{x,y}$ behave like pseudo-lines and thus in this case we can easily apply Levi's Enlargement Lemma \cite{Levi} (for a recent proof of Levi's Enlargement Lemma see \cite{levinew}) to conclude the statement of the lemma. Moreover, this additional assumption could be indeed assumed when Lemma \ref{lem:transversaledge} will be used during the proof of Theorem \ref{thm:planedelgraph} as before applying the lemma we will apply Lemma \ref{lem:emptylens}, and so we can assume that there are no empty lenses.
Yet we have decided to state and prove this slightly more general statement, in order to be as self-contained as possible.

\begin{proof}[Proof of Lemma \ref{lem:transversaledge}]
	For any pair $p,q$ of points of $S$, denote by $F_{p,q}$ the unique pseudo-disk containing exactly these two points from $S$, if it exists (uniqueness follows from Observation \ref{obs:respectunique} as $\F$ respects $S$).
	
		We will draw the curve connecting $x$ and $y$ inside $F_{x,y}$.
	Thus, it cannot intersect any pseudo-disk that lies outside $F_{x,y}$, that is, as $\F$ respects $S$, any pseudo-disk that contains two points of $S$, both different from $x$ and $y$. Thus when drawing the arc, we only need to care about pseudo-disks of two \emph{types}, of type $F_{x,*}$, namely, a pseudo-disk $F_{x,p}$ for some $p$ different from $y$ and of type $F_{y,*}$, that is, a pseudo-disk $F_{y,q}$ for some $q$ different from $x$. Note that $F_{x,y}$ is not of any of these two types.
	
	If $p\ne q$, then $F_{x,p}\cap F_{y,q}=\emptyset$ as $\F$ respects $S$. This implies the following:
	
	\begin{prop}\label{prop:thmprop}
	If a point (not necessarily from $S$) inside $F_{x,y}$ is contained in at least $3$ pseudo-disks different from $F_{x,y}$ (that is, has depth at least $4$), then all these pseudo-disks must be of the same type.
	\end{prop}

\begin{figure}[t]
	\centering
	\includegraphics[height=7cm]{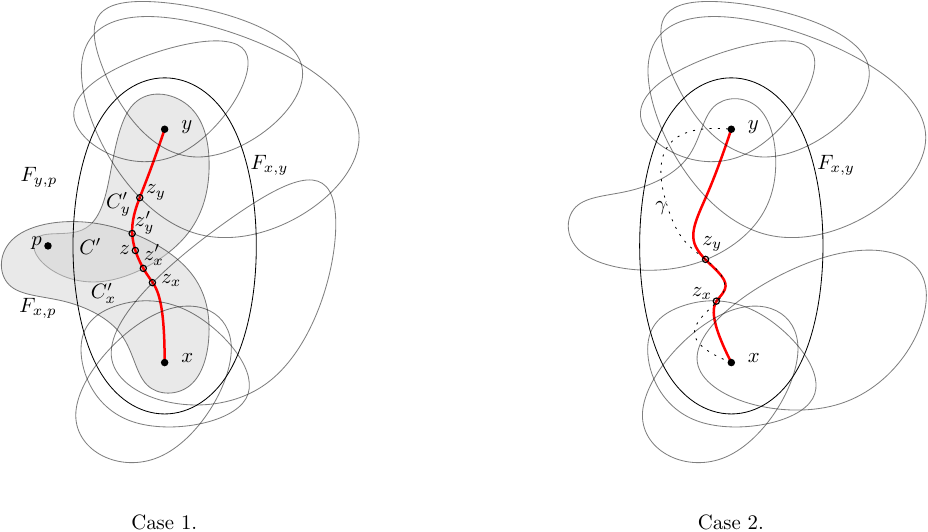}
	\caption{Drawing the curve connecting $x$ and $y$.}
	\label{fig:case12}	
\end{figure}

	Now we can continue with the proof of Lemma \ref{lem:transversaledge} --- for illustrations see Figure \ref{fig:case12}.
	
	\begin{itemize}
		\item Case 1. There is a point $z$ (not from $S$) in $F_{x,y}$ contained in pseudo-disks of both types. 
	
	Then by Proposition \ref{prop:thmprop} it must be a $3$-deep point which besides $F_{x,y}$ is contained only in $F_{x,p}$ and $F_{y,p}$ for some $p$. Also, in the arrangement of the pseudo-disks the cell containing $z$ must be bounded only by boundary parts of $F_{x,y},F_{x,p}$ and $F_{y,p}$, otherwise a point in a neighboring cell would contradict Proposition \ref{prop:thmprop}. In fact, it is easy to see that it must be bounded by parts of the boundaries of all of these three pseudo-disks, otherwise we would have an empty lens.
	
	Take now the arrangement defined by $F_{x,y}$ and the pseudo-disks of type $F_{x,*}$. Let $C_x$ be the cell containing $z$ in this arrangement. Take the first intersection point $z_x$ of the ray guaranteed by Lemma \ref{lem:ray} going from $z$ to $x$ with the boundary of $C_x$. Note that $C_x$ is disjoint from all pseudo-disks of type $F_{y,*}$ except for $F_{y,p}$. Moreover, again by Proposition \ref{prop:thmprop}, the boundary of $F_{y,p}$ intersects the boundary of $F_{x,y}$ and the boundary of $F_{x,p}$ in two points but does not intersect the boundaries of other pseudo-disks of type $F_{x,*}$. Thus, $F_{y,p}$ subdivides $C_x$ into at most three parts: $C'$ containing $z$, $C_x'$ having $z_x$ on its boundary, and to a possible third cell. It is easy to see that $C'$ and $C_x'$ must share boundary parts, and so we can choose a point $z_x'$ on their common boundary. Note that $C'$ is actually the cell containing $z$ in the arrangement of all pseudo-disks.
	
	Now take the arrangement defined by $F_{x,y}$ and the pseudo-disks of type $F_{y,*}$. A symmetric argument gives the cells $C'$ and $C_y'$ and the points $z_y$ and $z_y'$ (note that we get the same $C'$ as it is again the cell containing $z$ in the arrangement of all pseudo-disks).
	
	Now the curve connecting $x$ and $y$, and intersecting the boundary of every pseudo-disk at most once consists of the following
	parts: a curve from $x$ to $z_x$ along a ray guaranteed by Lemma \ref{lem:ray} (applied for the pseudo-disks containing $x$), a curve inside $C_x'$ from $z_x$ to $z_x'$, a curve from $z_x'$ to $z_y'$ inside $C'$ (which can go through $z$ if we wish to), a curve from $z_y'$ to $z_y$ inside $C_y'$ and finally a curve from $z_y$ to $y$ along a ray guaranteed by Lemma \ref{lem:ray} (applied for the pseudo-disks containing $y$). By Lemma \ref{lem:ray} every point of the curve connecting $x$ and $z_x$ is inside at least two pseudo-disks of type $F_{x,*}$ and thus by Proposition \ref{prop:thmprop} it cannot intersect any pseudo-disks of type $F_{y,*}$. Similarly, the curve connecting $y$ and $z_y$ cannot intersect any pseudo-disks of type $F_{x,*}$.
	Altogether, using again Lemma \ref{lem:ray}, we get that the union of these curves defines a curve intersecting every pseudo-disk boundary at most once, as required.
	
\item	Case 2. Every point in $F_{x,y}$ is contained only in pseudo-disks of one type.
	
	In this case going along an arbitrary curve $\gamma$ from $x$ to $y$, the last point $z_x$ which is contained in a pseudo-disk of type $F_{x,*}$ must be at least $2$-deep and contained in $F_{x,y}$ and in pseudo-disks only of type $F_{x,*}$ (if there are no pseudo-disks of type $F_{x,*}$, let $z_x=x$.). Going further along this curve towards $y$, there are $1$-deep points and then the first at least $2$-deep point $z_y$ must be contained in $F_{x,y}$ and in pseudo-disks only of type $F_{y,*}$ (or if there are none, let $z_y=y$.). Now we can apply Lemma \ref{lem:ray} to get a curve from $x$ to $z_x$  (applied for the pseudo-disks containing $x$). This will be disjoint from each $F_{y,*}$, as there are no points in $F_{x,y}$ contained in pseudo-disks of both types. We similarly connect $y$ to $z_y$ using Lemma \ref{lem:ray}. Together with the part of $\gamma$ connecting $z_x$ to $z_y$ we get a curve, which using again Lemma \ref{lem:ray} intersects every pseudo-disk boundary at most once, as required.		
	\end{itemize}
	\end{proof}

\begin{proof}[Proof of Theorem \ref{thm:planedelgraph}]
	Given a finite pseudo-disk family $\F$ and a point set $S$, we want to find a plane drawing of the Delaunay-graph of $S$ with respect to $\F$ such that every edge $pq$ lies inside every pseudo-disk containing both $p$ and $q$.
	
	First we shrink $\F$ using Lemma \ref{lem:respect} to get a family that respects $S$. As we did a hypergraph preserving shrinking on $S$, the new family has the same (possibly empty) Delaunay-graph as $\F$. Next we remove all the pseudo-disks containing at most one or at least three points from $S$, by which the Delaunay-graph is again left intact. We get the pseudo-disk family $\F'$. Notice that as $\F'$ respects $S$, for every pair of points $p,q$ that are connected by an edge in the Delaunay-graph, in $\F'$ there is exactly one pseudo-disk $F'_{pq}$ that contains these two points (and no other point of $S$).
	
	We claim that a plane drawing of the Delaunay-graph of $\F'$ with respect to $S$ such that for each edge $pq$ its drawing lies inside $F'_{pq}$ is a plane drawing of the Delaunay-graph of $\F$ with respect to $S$ as required. Indeed, take an arbitrary edge $pq$ of the Delaunay-graph of $\F$ with respect to $S$ and let $F$ be an arbitrary pseudo-disk such that $p,q\in F\in \F$. After shrinking $\F$ to $\F'$, $F$ was shrunk to some $F'$ (possibly in multiple steps) which must contain $F'_{pq}$ as $F'\cap S=F\cap S\supseteq \{p,q\}=F'_{pq}\cap S$ and $\F'$ respects $S$. Thus, the drawing of the edge $pq$ lies inside $F'_{pq}\subset F'\subset F$, as claimed.\footnote{We note that for this argument to work it was important that we removed the pseudo-disks containing at least $3$ points only after we shrank $\F$ to a family that respects $S$.}
	
	Thus, we are left to prove that there exists a plane drawing of the Delaunay-graph of $\F'$ with respect to $S$ such that for each edge $pq$ its drawing lies inside $F'_{pq}$. We will prove this by drawing the edges one-by-one using Lemma \ref{lem:transversaledge}.
	
	We additionally require that every drawn edge intersects the boundary of every pseudo-disk of $\F'$ at most once (which implies that it intersects a boundary only when it is necessary, that is, when exactly one of its endpoints is inside this pseudo-disk).\footnote{Note that we do not guarantee this additional property for the original pseudo-disks of $\F$. See also the remark before the proof of Lemma \ref{lem:transversaledge}.}
	We take the edges one-by-one in an arbitrary order. We draw the first edge using Lemma \ref{lem:transversaledge}. The additional requirement holds for this first drawn edge by Lemma \ref{lem:transversaledge}.
	
	Suppose that the next edge we want to draw connects $x$ and $y$ and we want to draw it inside $F_{x,y}$. For an illustration of the rest of the proof see Figure \ref{fig:addingedges}.
	Draw it first using Lemma \ref{lem:transversaledge}, then this drawn curve $f$ intersects the boundary of any other pseudo-disk at most once. Even though $f$ may intersect previously drawn edges, it can only intersect edges connecting $x$ or $y$ to some other points of $S$ as all other edges lie outside $F_{x,y}$. Indeed, such an edge, connecting points $p,q$ where $\{p,q\}\cap \{x,y\}=\emptyset$, is drawn inside $F_{p,q}$, which in turn is disjoint from $F_{x,y}$ using that $\F'$ respects $S$. Take the intersection $z_x$ of $f$ with a drawing of an edge $xp$ which is farthest from $x$ along $f$ among edges of this type (or $z_x=x$ if there is no such intersection). Take also the intersection $z_y$ of $f$ with a drawing of an edge $yq$ which is farther from $x$ than $z_x$ along $f$ and is the closest to $x$ among edges of this type (or $z_y=y$ if there is no such intersection). Now take the curve $f'$ which goes from $x$ to $z_x$ along the already drawn $xp$ (very close to and on the appropriate side - as all boundaries are Jordan curves intersecting at most twice, this can be done such that a part of a boundary curve and a curve running close to it intersects exactly the same boundary curves) on which $z_x$ lies, then goes along $f$ from $z_x$ to $z_y$ and then goes from $z_y$ to $y$ along the already drawn $yq$ (again very close to and on the appropriate side) on which $z_y$ lies. By appropriate sides we mean that we choose sides such that $f'$ does not intersect the edges $xp$ and $yq$ apart from their endpoints. This gives a drawing $f'$ of the edge $xy$ which does not intersect any earlier edge. 
	
	We need to prove that $f'$ lies inside $F_{x,y}$. As $f$ lies inside $F_{x,y}$, we only need to care about the two parts that are drawn along the drawings of the edges $xp$ and $yq$. Yet by induction these edges were drawn such that they intersect $\partial F_{x,y}$ once which implies that along these edges the points inside $F_{x,y}$ form a connected component. Thus the edge parts that connect $x$ to $z_x$ and $y$ to $z_y$ lie inside $F_{x,y}$ as $x,z_x,y,z_y$ are all inside $F_{x,y}$. That is, all three parts of $f'$ lie inside $F_{x,y}$, as required.
	
	\begin{figure}[t]
		\centering
		\includegraphics[height=4cm]{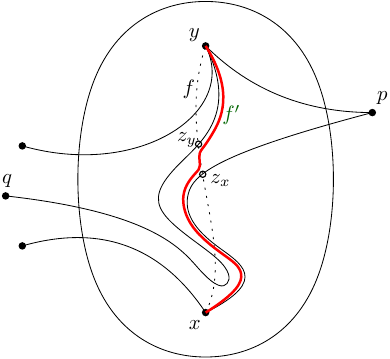}
		\caption{Adding the drawing of the edge $xy$.}
		\label{fig:addingedges}	
	\end{figure}

	Now we are left to prove that $f'$ intersects the boundary of every pseudo-disk at most once.
	For this we prove that for an arbitrary pseudo-disk $F$ its boundary $\partial F$ intersects $f'$ the same number of times as it intersects $f$. Denote by $f'_x$ the part of $f'$ between $x$ and $z_x$ and by $f_x$ the part of $f$ between $x$ and $z_x$.
	Both $f_x$ and $f'_x$ can intersect $\partial F$ at most once as $f_x$ is part of $f$ while $f'_x$ is part of the drawn edge $xp$, both of which intersect $\partial F$ at most once. This implies that $f'_x$ intersects $\partial F$ if and only if $f_x$ does as both of these happens if and only if $\partial F$ separates $x$ from $z_x$.
	The similar statement holds for the part of $f'$ between $y$ and $z_y$.
	
	As the remaining (middle) parts of $f$ and $f'$ coincide, we can conclude that they intersect $\partial F$ the same number of times.
	As $f$ intersected $\partial F$ at most once, the same holds for $f'$ as well.
	
	We have seen that we can add an arbitrary edge.
	Repeating this process for all edges we get that the whole Delaunay-graph can be drawn in the plane as required.
\end{proof}

\section{Discussion}\label{sec:discussion}
\subsection{More general families}
We start by discussing the impossibility of several possible strengthenings of Theorem \ref{thm:planedelgraph}.

First, the theorem is trivially not true for a family of arbitrary convex regions as already the Delaunay-graph may not be planar (let alone aligned). Indeed, take $5$ points in general position to be $S$ and take $\F$ to be the family of the following convex regions: we connect any pair of points of $S$ with (slightly thickened) closed segments. The Delaunay-graph of $S$ with respect to $\F$ is $K_5$ which is non-planar.

Nevertheless, planarity of the Delaunay-graph was recently proved also in much more general settings \cite{K18,RR18}, in the latter also proving something stronger, the existence of a planar support (we discuss this in detail a bit later).  On the other hand, we now show that even for the weakest among these more general cases we already cannot guarantee that the drawing is aligned with the family. These generalizations consider $k$-admissible and more generally, non-piercing regions. A family of Jordan-regions $\F$ is \emph{non-piercing} if for every pair of regions $F,F'\in \F$ both $F\setminus F'$ and $F'\setminus F$ are connected and their boundaries intersect finite many times. A family of Jordan-regions is \emph{$k$-admissible} if it is non-piercing and for every pair of regions $F,F'\in \F$ their boundaries intersect at most $k$ times. Notice that the $2$-admissible families of regions are exactly the families of pseudo-disks. So even though even non-piercing families have planar Delaunay-graphs \cite{RR18}, it is easy to see that already for $4$-admissible families (which are also $k$-admissible for every $k\ge 4$ and also non-piercing; notice also that if touchings are not allowed, then a $3$-admissible family must also be $2$-admissible) the Delaunay-graph may be impossible to align with the family. Indeed, take two regions $F,F'$ whose boundaries intersect $4$ times and for which $F\cap F'$ is disconnected. Now if $S$ consists of two points, in different connected components of $F\cap F'$, then it is impossible to connect them inside $F\cap F'$ which would be required in an aligned drawing. See Figure \ref{fig:4admissible}.

\begin{figure}[t]
	\centering
	\includegraphics[height=3cm]{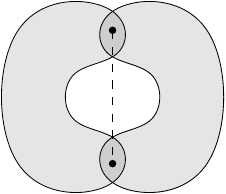}
	\caption{Delaunay-graph of a $4$-admissible family that cannot be aligned.}
	\label{fig:4admissible}	
\end{figure}
Thus, we conclude that pseudo-disk families are in this sense the most general natural families for which an aligned planar drawing of the Delaunay-graph always exists.

However, in a different fashion it is possible to say something positive about the drawing of the Delaunay-graph of a point set with respect to a $k$-admissible family for arbitrary $k$. Given a finite set of points $S$ and a finite $k$-admissible family of regions for some $k$, such that every region contains exactly two points from $S$ (i.e., all regions of $\F$ define a Delaunay-edge of the Delaunay-graph $D$ of $S$ with respect to $\F$), in \cite{PR} the following was proved. There exists a multigraph $D'$ which we get by multiplying edges of $D$ and a drawing of $D'$ in the plane without crossings such that for every region $F$ of $\F$ there exists an edge of $D'$ connecting the two points of $F\cap S$ and whose drawing lies completely in $F$. In other words, an aligned drawing is possible even for non-piercing families, provided that we only care about regions containing two points and that we allow to draw multiple copies of an edge.

\subsection{Infinite pseudo-disk families}
One can wonder whether the assumption that $\F$ is finite was necessary in Theorem \ref{thm:planedelgraph}. After all, in Theorem \ref{thm:classic} we did not make this assumption. The following example shows that for infinite families of pseudo-disks we cannot guarantee a planar aligned drawing of the Delaunay-graph (see Figure \ref{fig:infcounterex} for an illustration of the following construction): let $x=(0,0)$ and $y=(1,1)$ and $z=(1,-1)$ the three points of $S$, and let $p=(1,0)$ ($p$ is not in $S$). Our pseudo-disk family contains for all $k\ge 10$ two regions. First we take the union $f_{xy}$ of the two segments connecting $x$ to $p$ and $p$ to $y$, and take the Minkowski sum of $f_{xy}$ with a ball of radius $\frac 1{2k}$. We do the same for the union $f_{xz}$ of the two segments connecting $x$ to $p$ and $p$ to $z$ and a ball of radius $\frac 1{2k+1}$. It is easy to see that this is a pseudo-disk family. The Delaunay-graph of this family has two edges, $xy$ and $xz$ and it is easy to see that an aligned drawing of it is unique, $xy$ must be drawn as $f_{xy}$ and $xz$ must be drawn as $f_{xz}$. This is not a planar drawing.

\begin{figure}[t]
	\centering
	\includegraphics[height=3cm]{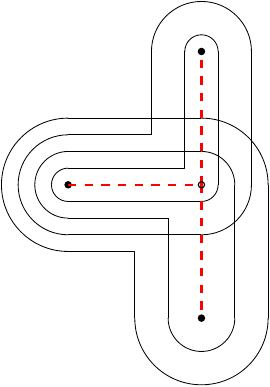}
	\caption{An inifinite family of pseudo-disks for which the Delaunay-graph does not have an aligned planar drawing.}
	\label{fig:infcounterex}	
\end{figure}

Nevertheless a natural relaxation of Theorem \ref{thm:planedelgraph} is true for infinite families. We claim that there exists a drawing of the Delaunay-graph of $S$ with respect to $S$ which is aligned with $\F$ and is a limit of planar drawings. This relaxation allows overlapping parts of the drawn edges as far as they do not cross. This definition is an extension of the definition of weakly simple drawings of curves (see the definition, e.g., in \cite{chang}) to multiple curves forming the edges of a graph.

Recall that we defined pseudo-disks to be closed and bounded. 
First, if $|\F|$ is uncountable, then for any $p,q\in S$ consider the pseudo-disk family $\F_{p,q}=\{F_{p,q}\in \F\mid p,q\in F_{p,q}\}.$
The intersection $\cap \F_{p,q}$ 
can also be obtained as the intersection of countably many members of $\F_{p,q}$ because the plane is a hereditary Lindel\"of space (i.e., every union of open sets has a countable subunion that is equal to it).
Therefore, it is sufficient to keep only countable many members of $\F_{p,q}$ to get the same containment requirement about the edge $pq$.
Thus from now on, we can suppose that $\F$ is countable.

Now if $|\F|$ is countable, then we start by a finite subfamily $\F_0$ of $\F$ which defines the same Delaunay-hypergraph as $\F$ (such a subfamily exists). Now we add the rest of the regions in $\F$ to $\F_0$ one by one, and each time we apply Theorem \ref{thm:planedelgraph}. This way we get an infinite series of planar drawings. Since every drawing is a closed subset of a compact set, there is a subseries of these drawings which has a limit. The limit graph is not necessarily planar but by definition it is the limit of planar graphs. Furthermore every pseudo-disk $F$ containing two points $p,q\in S$ gets at some point into the family, and from then on the drawing of the edge $pq$ is inside $F$, which implies that the limit of these drawings in also inside $F$ (as $F$ is closed), that is, the drawing of the final graph is aligned with $F$, as required.

\subsection{Planar supports}\label{sec:planarsupports}

We finish this section by showing the connections of our main result to planar supports. A \emph{planar support} of $S$ with respect to $\F$ is a planar graph $G$ on $S$ such that for every $F\in\F$ the point set $H_F=F\cap S$ induces a connected subgraph of $G$. Note that such a graph must contain the Delaunay-graph as a subgraph so in particular the existence of a planar support implies that the Delaunay-graph is planar.

First we need the following (restated) lemma from \cite{Pin14}.

\begin{lem}[Pinchasi \cite{Pin14}]\label{lem:shrink2}
	Given a finite family $\F$ of pseudo-disks, if a pseudo-disk $F\in \F$ contains exactly $k$ points of $S$, one of which is $p\in S$, then for every $2\le l\le k$ there exists a set $F'\subset F$ such that $p\in F'$ and $|F'\cap S|=l$, and $\F\cup\{F'\}$ is again a family of pseudo-disks.
\end{lem}

\begin{defi}
	We say that a pseudo-disk family $\F$ is {\em (abstractly) shrinkable} over $S$ if for every $F\in \F$, every $p\in F\cap S$ and every $2\le l\le k$ there exists a set $F'\in \F$ such that $p\in F'$ and $|F'\cap S|=l$.\footnote{The word `abstractly' in the definition emphasizes that $F'\subset F$ is now not required.}
\end{defi}

\begin{cor}\label{cor:extend}
	Given a finite pseudo-disk family $\F$ and a finite point set $S$, $\F$ can be extended to a finite pseudo-disk family shrinkable over $S$.
\end{cor} 

An example of a shrinkable pseudo-disk family is the collection of all disks in the plane over a finite $S$ that does not contain four points on a circle.
More generally, instead of disks, the family of all homothets of any convex set with a smooth boundary is shrinkable over a finite $S$ that does not contain four points on the boundary of a homothet. The family of all homothets of a convex polygon $C$ is also shrinkable over a finite $S$ if $S$ in general position with respect to $C$, that is, there is no homothet of $C$ that contains at least four points of $S$ on its boundary and no two points in $S$ define a line that is parallel to a line-segment which is a part of the boundary of $C$ \cite{tuple}.

\begin{lem}\label{lem:shrinkable}
	Given a pseudo-disk family $\F$ shrinkable over a finite point set $S$, for every $F\in\F$ the subgraph of $\D(S,\F)$ (the  Delaunay-graph of $S$ with respect to $\F$) induced by $F\cap S$ is a connected graph.
\end{lem}
\begin{proof}
	Using that $\F$ is shrinkable, applying the definition for an arbitrary point $p\in F\cap S$, there is an another point $q\in F\cap S$ such that there is an $F'\in \F$ for which $F'\cap S=F\cap S\setminus \{q\}$.
	By induction, the Delaunay-graph restricted to $F'\cap S$ is connected.
	Applying now the definition of shrinkability to $q$, the same holds for some other point $q'\in F\cap S$.
	Since $F\cap S\setminus \{q\}$ and $F\cap S\setminus \{q'\}$ both induce connected graphs, so does $F\cap S$ unless $F\cap S=\{q,q'\}$, but in this latter case $(q,q')$ is an edge of the Delaunay-graph because of $F$.
	This finishes the proof.
\end{proof}

Using that every pseudo-disk family can be extended to a shrinkable family by Corollary \ref{cor:extend}, Lemma \ref{lem:shrinkable} implies that given a finite point set $S$ and a finite pseudo-disk family $\F$, $S$ has a planar support with respect to $\F$. This has been shown earlier for pseudo-disks and more generally for $k$-admissible regions for every $k$ \cite{PR} (this was recently further generalized by \cite{RR18} where they show a much more general result about so-called intersection hypergraphs where not only the hyperedges but also the vertex set corresponds to a family of non-piercing regions). However, for pseudo-disks Theorem \ref{thm:planedelgraph} implies that a planar support which is aligned with $\F$ also exists (while we have seen that in the more general cases already the Delaunay-graph may not be possible to draw aligned with $\F$).

\begin{cor}\label{cor:planarsupport}
	Given a finite pseudo-disk family $\F$ and a finite point set $S$, $S$ has a planar support with respect to $\F$ in which every edge $pq$ lies inside every pseudo-disk containing both $p$ and $q$.
\end{cor}

\begin{proof}
	We extend $\F$ using Corollary \ref{cor:extend} to a family $\F'$ which is shrinkable over $S$. We draw the Delaunay-graph of $\F'$ using Theorem \ref{thm:planedelgraph} which is a planar support by Lemma \ref{lem:shrinkable} and furthermore every edge $pq$ lies inside every pseudo-disk (of $\F'$ and thus also of $\F$) containing both $p$ and $q$ due to Theorem \ref{thm:planedelgraph}, as required.
\end{proof}

\subsubsection*{Acknowledgment}

We thank all reviewers for the careful reading of the manuscript, and for calling our attention to related literature (\cite{KU13}), and Eyal Ackerman for several useful comments which substantially improved the manuscript.

\end{document}